\newtheorem{theorem}{Theorem}
\newtheorem{lemma}{Lemma}
\newtheorem{proposition}{Proposition}
\newtheorem{definition}{Definition}
\newcommand{\cB}{\ensuremath{\mathcal B}}
\newcommand{\cI}{\ensuremath{\mathcal I}}
\newcommand{\sU}{\ensuremath{\mathsf U}}
\newcommand{\sY}{\ensuremath{\mathsf Y}}
\newcommand{\PP}{\ensuremath{{\mathbb P}}}
\newcommand{\QQ}{\ensuremath{{\mathbb Q}}}
\newcommand{\E}{\ensuremath{\mathbb E}}
\newcommand{\crit}{\ensuremath{\underline{\sigma}}}
\newcommand{\Naturals}{\ensuremath{\mathbb N}}
\newcommand{\Reals}{\ensuremath{\mathbb R}}
\def\eps{\varepsilon}
\newcommand{\tr}{\ensuremath{{\scriptscriptstyle\mathsf{T}}}}
\newcommand{\deq}{\stackrel{\scriptscriptstyle\triangle}{=}}
\def\wh#1{\ensuremath{\hat{#1}}}
\def\td#1{\ensuremath{\tilde{#1}}}
\def\argmin{\operatornamewithlimits{arg\,min}}
\title{\Large \bf Divergence-Based Characterization of
Fundamental Limitations\\of Adaptive Dynamical Systems}
\author{Maxim Raginsky
\thanks{This work was supported by NSF grant CCF-1017564 and by AFOSR grant FA9550-10-1-0390.}
\thanks{The author is with the Department of Electrical and Computer Engineering, Duke University, Durham, NC. E-mail: {m.raginsky@duke.edu}.}}
\begin{document}

\maketitle

\begin{abstract}
	Adaptive dynamical systems arise in a multitude of contexts, e.g., optimization, control, communications, signal processing, and machine learning. A precise characterization of their fundamental limitations is therefore of paramount importance. In this paper, we consider the general problem of adaptively controlling and/or identifying a stochastic dynamical system, where our {\em a priori} knowledge allows us to place the system in a subset of a metric space (the uncertainty set). We present an information-theoretic meta-theorem that captures the trade-off between the metric complexity (or richness) of the uncertainty set, the amount of information acquired online in the process of controlling and observing the system, and the residual uncertainty remaining after the observations have been collected. Following the approach of Zames, we quantify {\em a priori} information by the Kolmogorov (metric) entropy of the uncertainty set, while the information acquired online is expressed as a sum of information divergences. The general theory is used to derive new minimax lower bounds on the metric identification error, as well as to give a simple derivation of the minimum time needed to stabilize an uncertain stochastic linear system.
\end{abstract}

\thispagestyle{empty}
\pagestyle{empty}

\section{Introduction}
\label{sec:intro}

What is adaptation? What is learning? These two questions arise all the time in practically any discussion of complex systems exhibiting complex behaviors. In control theory, these notions were a consistent theme in the work of George Zames (see, e.g.,~\cite{Zames} and references therein), who has put forward the following theses:
\begin{enumerate}
	\item Adaptation and learning involve acquisition of information about the object (system) being controlled.
	\item The appropriate notions of information are metric, locating the system in, say, a ball in a metric space.
	\item Acquiring information takes time.
	\item Nonadaptive (or robust) control optimizes performance on the basis of {\em a priori} information, whereas adaptive control is based on {\em a posteriori} information acquired online.
\end{enumerate}
In this paper, we take up the problem of characterizing the {\em fundamental limitations} of adaptive stochastic dynamical systems following the programme of Zames. We start by presenting a ``Meta-Theorem'' that ties together the three kinds of information mentioned by Zames: {\em a priori} information, represented by the metric complexity of the class of systems of interest; information acquired {\em online} as the system is being controlled; and {\em a posteriori} information, pertaining to the difficulty of identifying the system after a given length of time. Roughly speaking, given an arbitrary class of systems, an arbitrary controller, and an arbitrary identification algorithm, the Meta-Theorem quantifies the interplay and the trade-off between the initial uncertainty about the system, the online performance of the controller, and the final uncertainty remaining after the control task had been carried out.

We follow Zames in two key respects:
\begin{enumerate}
	\item We adopt the Kolmogorov entropy \cite{KolTih61} as our measure of {\em a priori} uncertainty (or complexity) of the class of systems at hand.
	\item We compare this initial uncertainty against the uncertainty remaining after the control signals have been applied.
\end{enumerate}
However, the novel aspect of our approach is the way in which we quantify the process of online information acquisition --- namely, through Shannon's information theory \cite{CovTho06}. Conceptually, our methodology is close to the way information-theoretic tools are being used in mathematical statistics to derive minimax bounds on the risk of statistical estimation procedures (see, e.g.,~\cite{Yu97,YanBar99,YangAISTATS} and references therein). The difference between statistical estimation and adaptive control, however, lies in the fact that, in control, we actively {\em intervene} into the system in order to steer it towards some desired state (control proper) or to learn something about the system (system identification). When we do not possess complete knowledge of the system, these two objectives may be in conflict, giving rise to the so-called {\em dual effect} of control \cite{BarTse}. With the exception of experimental design \cite{Fed72,Pan05} (and, in particular, some work connecting it with control \cite{LalleyLorden,GautierPronzato}), statistical estimation involves passively observing sample paths of a random process for the purpose of inference. Our Meta-Theorem covers both estimation and control, since the former can be viewed as an application of a control strategy that has no effect on the system, and it provides a way of quantifying the dual effect in the latter.

Following the statement and the proof of the Meta-Theorem in Section~\ref{sec:meta}, we show how it can be used to derive (a)  fundamental limits on the performance of system identification from input-output data, and (b) a lower bound on the minimum time needed to adaptively stabilize an uncertain linear system.

For system identification, we derive a minimax lower bound on the metric identification error, which shows that the intrinsic difficulty of identifying a system is determined by the balance of {\em a priori} metric information and the rate at which {\em a posteriori} information accumulates over time. We also show that ease of identification implies small {\em a priori} uncertainty. These results apply to any controller and any identification algorithm, providing yet another quantitative illustration of the dual effect. Bounds of similar flavor were derived by Yang \cite{YangAISTATS} in the context of statistical estimation from i.i.d.\ samples, and our techniques combine those of Yang with a more careful accounting of the accumulation of information during control/identification.

As for adaptive control, the first lower bounds on the rate of convergence in adaptive control are due to Nemirovski and Tsypkin \cite{NemTsy} (see also \cite{LaiCE} for further references), and we consider the same set-up. However, the proof in \cite{NemTsy} is rather lengthy and relies on the Cram\'er--Rao inequality. By contrast, we use the Meta-Theorem, which results in a much simpler and more direct information-theoretic argument.

\section{The ingredients: systems, controllers, identification algorithms}

A stochastic dynamical system is specified by a sequence of stochastic kernels relating present and past inputs and outputs to future outputs. The system is initially unknown, apart from the fact that we can place it in some {\em uncertainty set}, which is a subset of a metric space. The system is interconnected with a controller, which generates the inputs given past inputs and outputs. The exact purpose of control can be completely arbitrary, but we stipulate that the controller has been designed only with the knowledge of the uncertainty set. Finally, we consider the possibility that the observed temporal evolution of the system (i.e.,~its input-output trajectory) may be fed into an identification algorithm with the purpose of locating the system in a ``small'' region of the uncertainty set.

Specifically, we consider discrete-time stochastic dynamical systems with input space $\sU$ and output space $\sY$ (all spaces are assumed to be standard Borel \cite{BS}). The dynamics are assumed to be causal and nonanticipative, and so can be represented as a sequence of stochastic kernels $\{ P_\theta(dy_t|y^{t-1},u^{t-1}) \}^\infty_{t=1}$, where $\theta$ is a parameter that takes values in some metric space $(\Theta,\rho)$ and, for each $t$,
\begin{align}
	&\Pr\big(Y_t \in B \big| Y^{t-1} = y^{t-1}, U^{t-1} = u^{t-1}\big) \nonumber\\
	& \qquad \qquad \qquad \qquad = \int_B P(dy_t|y^{t-1},u^{t-1})
\end{align}
for every Borel set $B \subseteq \sY$. The inputs are generated by a controller, which is itself a dynamical system described by a sequence of stochastic kernels $\{ Q_\gamma(du_t | y^t,u^{t-1})\}^\infty_{t=1}$, where $\gamma$ is a parameter that takes values in some space $\Gamma$ that indexes the admissible controllers (e.g., open-loop, affine, Lipschitz, Markov, stationary, etc.). The system $\theta$ and the controller $\gamma$ are {\em interconnected} to form the joint probability law $\Pi_{\theta,\gamma}$ of $\{(Y_t,U_t)\}^\infty_{t=1}$ on $(\sY \times \sU)^\infty$, so that for each $T \in \Naturals$ we have
\begin{align}
	& \Pi_{\theta,\gamma}(dy^T, du^T) \nonumber\\
	& \qquad  = \bigotimes^{T}_{t=1} Q_\gamma(du_t|y^t,u^{t-1}) \otimes P_\theta(dy_{t}|y^{t-1},u^{t-1}).
\end{align}
Finally, we consider {\em identification algorithms} that observe the system trajectory $(Y_1,U_1),(Y_2,U_2),\ldots$ and attempt to estimate the true system model $\theta$. We will consider deterministic identification algorithms, so for each $T$ we define a $T$-step identification algorithm as a measurable mapping $\wh{\theta}_T : \sY^T \times \sU^T \to \Theta$.

\section{Prelude: identification error and metric complexity}

As stated earlier, we assume some {\em a priori} knowledge about the system of interest, namely that it lies in some uncertainty set $\Lambda \subseteq \Theta$. Since our primary interest is in capturing the interplay between identification and control, we need to quantify the extent to which the systems in $\Lambda$ can be identified after having been interconnected with a given controller $\gamma$ from $t=1$ to $t=T$:

\begin{definition} Consider a subset $\Lambda \subseteq \Theta$ of system models and a controller $\gamma$. Then the {\em $T$-step minimax identification error} on $\Lambda$ relative to $\gamma$ is given by
\begin{align}
	e_T(\Lambda,\gamma) \deq \inf_{\wh{\theta}_T} \sup_{\theta \in \Lambda} \E_{\theta,\gamma} \left\{ \rho\left(\wh{\theta}_T(Y^T,U^T),\theta\right)\right\},
\end{align}
where the infimum is over all $T$-step identification algorithms.
\end{definition}

\noindent The fact that the minimax identification error depends not only on the uncertainty set $\Lambda$, but also on the choice of the controller $\gamma$, is of key importance. The dependence on $\Lambda$ expresses the fact that some classes of systems are intrinsically more difficult to identify than others; the dependence on $\gamma$ captures the potential tension between control and identification/learning (the dual effect \cite{BarTse}). When system identification is the sole purpose, the controller $\gamma$ is typically open-loop \cite{Ljung78, Zames}, and the underlying deterministic sequence of inputs is chosen based on some criteria related to the structure of the uncertainty set, as well as to other constraints (e.g., stability, power, cost, etc.). However, there are also adaptive control strategies that adjust the behavior of the controller dynamically based on parameters estimated online \cite{GoodwinSin,KumarVaraiya}, and our definition of $e_T(\Lambda,\gamma)$ covers this possibility.

The basic idea, which in the context of control originated with Zames, is that the difficulty of identification is bound up with the richness of the uncertainty set $\Lambda$ --- the larger the uncertainty set, the harder it is to identify the system. We will combine this intuition with a probabilistic argument to show that, in a certain sense, system identification is no easier than hypothesis testing. Arguments of this sort are quite common in statistics \cite{Yu97,YanBar99}, but, as we shall see, they are equally applicable to control as well. To get things going, we start by proving a simple lower bound on $e_T(\Lambda,\gamma)$:

\begin{proposition}\label{prop:basic} Let $S$ be any finite \emph{$\eps$-separated subset of $\Lambda$}, i.e., for $S = \{\theta_1,\ldots,\theta_N\}$
	\begin{align}
		\rho(\theta_i,\theta_j) \ge \eps, \qquad \forall i \neq j.
	\end{align}
	Let $\cI_T(S)$ denote the set of all $T$-step identification algorithms taking values in $S$, i.e., $\cI_T(S) = \{ \wh{\theta}_T : \sY^T \times \sU^T \to S \}$.
	Then
	\begin{align}\label{eq:basic_minimax_bound}
		e_T(\Lambda,\gamma) \ge \frac{\eps}{2} \inf_{\wh{\theta}_T \in \cI_T(S)} \max_{\theta \in S} \Pi_{\theta,\gamma}\left\{ \wh{\theta}_T(Y^T,U^T) \neq \theta \right\}.
	\end{align}
\end{proposition}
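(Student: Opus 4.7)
The plan is to carry out the standard reduction from minimax estimation to minimax testing. Two observations drive the argument: (i) the supremum over $\Lambda$ is bounded below by the supremum over the finite subset $S$; and (ii) any $\Theta$-valued identification algorithm can be converted to an $S$-valued one via nearest-neighbor projection, and the $\eps$-separation controls what we lose in this projection.

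More concretely, first I would restrict attention to $\theta \in S \subseteq \Lambda$, so that
\begin{align}
e_T(\Lambda,\gamma) \;\ge\; \inf_{\wh{\theta}_T} \max_{\theta \in S} \E_{\theta,\gamma}\!\left\{\rho\!\left(\wh{\theta}_T(Y^T,U^T),\theta\right)\right\},
\end{align}
where the infimum is still over arbitrary $T$-step identification algorithms with values in $\Theta$. Given such a $\wh{\theta}_T$, I would define its nearest-neighbor quantization $\td{\theta}_T \in \cI_T(S)$ by $\td{\theta}_T(y^T,u^T) \deq \argmin_{\theta' \in S} \rho(\wh{\theta}_T(y^T,u^T), \theta')$ (breaking ties arbitrarily in a measurable way, which is possible since $S$ is finite).

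The key geometric step uses the triangle inequality together with the $\eps$-separation of $S$. If $\theta \in S$ is the true parameter and $\td{\theta}_T \neq \theta$, then for $\theta' = \td{\theta}_T$ we have $\rho(\theta,\theta') \ge \eps$ and, by the nearest-neighbor property, $\rho(\wh{\theta}_T,\theta') \le \rho(\wh{\theta}_T,\theta)$; hence
\begin{align}
\eps \;\le\; \rho(\theta,\theta') \;\le\; \rho(\wh{\theta}_T,\theta) + \rho(\wh{\theta}_T,\theta') \;\le\; 2\rho(\wh{\theta}_T,\theta),
\end{align}
so $\rho(\wh{\theta}_T,\theta) \ge \eps/2$ whenever the test errs. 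Markov's inequality then yields $\E_{\theta,\gamma}\{\rho(\wh{\theta}_T,\theta)\} \ge (\eps/2)\,\Pi_{\theta,\gamma}\{\td{\theta}_T \neq \theta\}$.

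Taking the maximum over $\theta \in S$ and noting that the map $\wh{\theta}_T \mapsto \td{\theta}_T$ sends the class of all $\Theta$-valued algorithms into $\cI_T(S)$ (and is surjective onto it, since any $\td{\theta}_T \in \cI_T(S)$ equals its own nearest-neighbor projection), I would take the infimum over $\wh{\theta}_T$ to conclude~\eqref{eq:basic_minimax_bound}. The only nontrivial step is the triangle-inequality/nearest-neighbor argument above; everything else is the standard bookkeeping of a reduction from estimation in a metric to testing among the centers of an $\eps$-separated net.
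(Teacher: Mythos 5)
Your proposal is correct and follows essentially the same route as the paper's proof: restriction of the supremum to $S$, Markov's inequality at level $\eps/2$, and a nearest-neighbor projection onto $S$ whose errors force $\rho(\wh{\theta}_T,\theta)\ge\eps/2$ via the triangle inequality and the $\eps$-separation. The only cosmetic difference is that you run the triangle-inequality step directly while the paper phrases it as a contradiction.
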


\begin{proof} Using the fact that $S \subset \Lambda$ and Markov's inequality, we can write
	\begin{align}\label{eq:Markov_step}
		e_T(\Lambda,\gamma) \ge \frac{\eps}{2}\inf_{\wh{\theta}_T}\max_{\theta \in S} \Pi_{\theta,\gamma}\left\{ \rho(\wh{\theta}_T,\theta) \ge \eps/2 \right\}.
	\end{align}
	Given an arbitrary $\wh{\theta}_T$, define 
	\begin{align}
		\td{\theta}_T \deq \argmin_{\theta' \in S}\rho(\wh{\theta}_T,\theta').
	\end{align}
	Clearly, $\td{\theta}_T \in \cI_T(S)$. Suppose $\theta \in S$. If $\rho(\wh{\theta}_T,\theta) < \eps/2$, then necessarily $\rho(\wh{\theta}_T,\td{\theta}_T) < \eps/2$. If $\td{\theta}_T \neq \theta$, the triangle inequality gives
	\begin{align}
		\rho(\td{\theta}_T,\wh{\theta}_T) \ge \rho(\td{\theta}_T,\theta) - \rho(\wh{\theta}_T,\theta) \ge \eps/2,
	\end{align}
	which is a contradiction. Hence, if $\td{\theta}_T \neq \theta$, then $\rho(\wh{\theta}_T,\theta) \ge \eps/2$. Thus,
	\begin{align}
	&	\max_{\theta \in S}\Pi_{\theta,\gamma}\left\{ \rho(\wh{\theta}_T,\theta) \ge \eps/2 \right\} \nonumber\\
	& \qquad \ge \max_{\theta \in S}\Pi_{\theta,\gamma} \left\{ \td{\theta}_T \neq \theta \right\} \\
	& \qquad \ge \inf_{\wh{\theta}_T \in \cI_T(S)} \max_{\theta \in S} \Pi_{\theta,\gamma} \left\{ \wh{\theta}_T \neq \theta \right\}.
	\end{align}
	Combining this with \eqref{eq:Markov_step}, we get \eqref{eq:basic_minimax_bound}.
\end{proof}

The above proposition suggests a trade-off between the separation $\eps$ and the probability of correct identification. Indeed, if we make $\eps$ small, then the size of the maximal $\eps$-separated subset will be large, which in turn will tend to increase the probability of identification error. This observation naturally prompts us to take a look at the growth of maximal separated subsets of $\Lambda$ as a function of the separation $\eps$, which is captured by Kolmogorov's notion of the metric entropy \cite{KolTih61}:

\begin{definition} Given a set $\Lambda \subseteq \Theta$, we define its {\em packing numbers} by
	\begin{align}
	& 	N_\rho(\eps;\Lambda) \deq \max\Big\{ N \ge 1 :\nonumber \\
	& \qquad	 \exists \theta_1,\ldots,\theta_N \in \Lambda \text{ \rm s.t. } \rho(\theta_i,\theta_j) \ge \eps, \forall i \neq j \Big\}
	\end{align}
and the corresponding Kolmogorov entropy by $H_\rho(\eps; \Lambda) \deq \log N_\rho(\eps;\Lambda)$.
\end{definition}

\section{The Meta-Theorem}
\label{sec:meta}

Now that all the ingredients are in place, we can state and prove our Meta-Theorem, which captures the interplay between the metric complexity of the uncertainty set $\Lambda$ ({\em a priori} information, as per Zames), the information acquired online by acting on the system and observing its response, and the uncertainty remaining after $T$ time steps. The main idea is to embed the problem of adaptive control and identification in a ``doubly stochastic'' set-up, in which Nature first selects a system at random from an $\eps$-separated subset of $\Lambda$, and then this system is interconnected with a given controller and fed into a given identification algorithm. The Meta-Theorem applies to any uncertainty set, any controller, and any identification algorithm. Our usage of the prefix ``meta'' is intended to draw parallels to recent work of Polyanskiy et al.~\cite{PPV,PolThesis}, which develops a ``meta-converse'' for channel coding by relating the performance of any channel coding scheme on one channel to its performance on another (we will elaborate on these parallels shortly).

Given a separation $\eps > 0$, let $\Lambda_\eps = \{\theta_1,\ldots,\theta_N\} \subset \Lambda$, $N = N_\rho(\eps; \Lambda)$, be any maximal $\eps$-packing set, and suppose that the system model is drawn {\em uniformly at random} from $\Lambda_\eps$. Then this system is interconnected with a given controller $\gamma$. To describe all the events pertaining to this interconnection, we construct a probability space $(\Omega,\cB,\PP)$ with the following random variables defined on it:
\begin{itemize}
	\item $W \in [N]$, the random choice of a system model in $\Lambda_\eps$
	\item $U^T \in \sU^T$, the inputs applied to the system by $\gamma$
	\item $Y^T \in \sY^T$, the resulting outputs.
\end{itemize}
These variables describe the interaction between the system and the controller, and thus have the causal ordering
\begin{align}
	W,Y_1,U_1,\ldots,Y_t,U_t,\ldots,Y_T,U_T,
\end{align}
where, $\PP$-almost surely,
\begin{align}
	& \PP(W = i) = \frac{1}{N}, \forall i \in [N] \\
	& \PP(U_t \in A | W,Y^t,U^{t-1}) = Q_{\gamma}(A | Y^t,U^{t-1}) \\
	& \PP(Y_t \in B | W,Y^{t-1},U^{t-1}) = P_{\theta_W}(B | Y^{t-1},U^{t-1})
\end{align}
for all Borel sets $A \subseteq \sU, B \subseteq \sY$. In other words, $W \to (Y^t,U^{t-1}) \to U_t$ is a Markov chain for each $t$. To simplify notation, let us denote by $Z_t$ the pair $(Y_t,U_t)$. At time $T$ the entire sequence $Z^T = (Z_1,\ldots,Z_T)$ is fed into an identification algorithm $\wh{\theta}_T$.

With these definitions, we are now in a position to state the Meta-Theorem:

\begin{theorem}\label{thm:main} Consider any controller $\gamma$ and any $T$-step identification algorithm $\wh{\theta}_T \in \cI_T(\Lambda_\eps)$. Then the bound
	\begin{align}\label{eq:main_bound}
 &H_\rho(\eps; \Lambda) \cdot \min_{\theta \in \Lambda_\eps} \Pi_{\theta,\gamma} \left\{ \wh{\theta}_T = \theta \right\}\nonumber \nonumber\\
& \le \sum^T_{t=1} D\big( \PP_{Y_t|Z^{t-1},W}\big\| \QQ_{Y_t|Z^{t-1}} \big| \PP_{U_t,Z^{t-1},W} \big) + \log 2
	\end{align}
holds for any sequence of stochastic kernels $\{ \QQ_{Y_t|Z^{t-1}}\}^T_{t=1}$ that satisfy the condition $\PP_{Y_t|Z^{t-1}} \ll \QQ_{Y_t|Z^{t-1}}, \forall t$.
\end{theorem}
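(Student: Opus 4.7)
The plan is to recast the identification problem over $\Lambda_\eps$ as a hypothesis-testing problem with a uniform prior on $W\in[N]$, bound the error probability by Fano's inequality, and then express the resulting mutual information $I(W;Z^T)$ as a sum of conditional divergences, replacing the ``true'' one-step output marginals by the reference kernels $\QQ_{Y_t|Z^{t-1}}$ via the golden formula for relative entropy.

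Since $W$ is uniform on $[N]$ with $N=N_\rho(\eps;\Lambda)$, $H(W)=H_\rho(\eps;\Lambda)$. For any decoder $\wh\theta_T\in\cI_T(\Lambda_\eps)$, Fano's inequality gives
\begin{align*}
H_\rho(\eps;\Lambda) - I(W;\wh\theta_T) \le \log 2 + \Pr\{\wh\theta_T \neq \theta_W\}\,H_\rho(\eps;\Lambda).
\end{align*}
Writing $\Pr\{\wh\theta_T = \theta_W\} = \tfrac{1}{N}\sum_{i=1}^N \Pi_{\theta_i,\gamma}\{\wh\theta_T = \theta_i\} \ge \min_{\theta\in\Lambda_\eps}\Pi_{\theta,\gamma}\{\wh\theta_T = \theta\}$ and rearranging yields
\begin{align*}
H_\rho(\eps;\Lambda)\cdot\min_{\theta\in\Lambda_\eps}\Pi_{\theta,\gamma}\{\wh\theta_T = \theta\} \le I(W;\wh\theta_T) + \log 2,
\end{align*}
which already produces the left-hand side of \eqref{eq:main_bound}. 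Since $\wh\theta_T$ is a measurable function of $Z^T$, the data-processing inequality and the chain rule give $I(W;\wh\theta_T) \le I(W;Z^T) = \sum_{t=1}^T \bigl[I(W;Y_t\mid Z^{t-1}) + I(W;U_t\mid Y_t,Z^{t-1})\bigr]$. The second term in each summand vanishes because the controller kernel $Q_\gamma(du_t\mid y^t,u^{t-1})$ does not depend on $W$, so $W\to(Y_t,Z^{t-1})\to U_t$ is a Markov chain.

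It remains to upper-bound each $I(W;Y_t\mid Z^{t-1})$ by the prescribed divergence. For any kernel $\QQ_{Y_t|Z^{t-1}}$ satisfying $\PP_{Y_t|Z^{t-1}}\ll\QQ_{Y_t|Z^{t-1}}$, the chain rule for relative entropy yields
\begin{align*}
& D\bigl(\PP_{Y_t|Z^{t-1},W}\,\|\,\QQ_{Y_t|Z^{t-1}}\,\big|\,\PP_{Z^{t-1},W}\bigr) \\
&\quad = I(W;Y_t\mid Z^{t-1}) + D\bigl(\PP_{Y_t|Z^{t-1}}\,\|\,\QQ_{Y_t|Z^{t-1}}\,\big|\,\PP_{Z^{t-1}}\bigr),
\end{align*}
so the left-hand side dominates $I(W;Y_t\mid Z^{t-1})$. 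Because the integrand depends only on $(Z^{t-1},W)$, enlarging the conditioning distribution from $\PP_{Z^{t-1},W}$ to $\PP_{U_t,Z^{t-1},W}$ leaves the value unchanged, matching the form in \eqref{eq:main_bound} after summing over $t$. The main care required is in this last step: verifying the chain-rule identity for conditional relative entropy on the Polish spaces $\sY,\sU$ under only the stated absolute-continuity hypothesis, which I would handle by manipulating Radon-Nikodym derivatives of the joint laws directly and invoking the tower property.
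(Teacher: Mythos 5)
Your proposal is correct and follows essentially the same route as the paper: Fano's inequality applied to the hypothesis $W$ uniform on the $\eps$-packing set, the chain-rule expansion of $I(W;Z^T)$ with the controller's Markov property killing the $I(W;U_t\mid Y_t,Z^{t-1})$ terms, and the golden-formula substitution of $\QQ_{Y_t|Z^{t-1}}$ for $\PP_{Y_t|Z^{t-1}}$ followed by dropping the nonnegative residual divergence. The only (harmless) cosmetic differences are that you route through $I(W;\wh\theta_T)$ plus data processing rather than applying Fano directly to $Z^T$, and that you explicitly justify the enlargement of the conditioning law to $\PP_{U_t,Z^{t-1},W}$, a point the paper leaves implicit.
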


\begin{proof} We start by observing that
	\begin{align}
		\max_{\theta \in \Lambda_\eps} \Pi_{\theta,\gamma} \left\{ \wh{\theta}_T \neq \theta \right\} \ge \inf_{\wh{W}}\PP \left\{ \wh{W} \neq W \right\},
	\end{align}
	where the infimum is over all estimators $\wh{W} : \sY^T \times \sU^T \to [N]$. Since any such $\wh{W}$ is $\sigma(Z^T)$-measurable and since $W$ is uniformly distributed on $[N]$, we can apply Fano's inequality \cite{CovTho06,HanVer94} to write
	\begin{align}
		\inf_{\wh{W}} \PP\{ \wh{W} \neq W \} \ge 1 - \frac{I(W; Z^T) + \log 2}{\log N},
	\end{align}
	where $I(W; Z^T)$ is the mutual information between $W$ and $Z^T = (Y^T,U^T)$ under $\PP$. We now expand this mutual information:
	\begin{align}
		&I(W; Z^T) = \sum^T_{t=1} I(W; Z_t | Z^{t-1}) \\
		&\quad = \sum^T_{t=1} I(W; Y_t,U_t|Z^{t-1}) \\
		&\quad = \sum^T_{t=1} [I(W; Y_t | Z^{t-1}) + I(W; U_t | Y_t, Z^{t-1})] \\
		&\quad = \sum^T_{t=1} I(W; Y_t | Z^{t-1}),\label{eq:info_sum}
	\end{align}
where the first three steps follow from the repeated application of the chain rule, while the last step uses the fact that $W \to (Y_t,Z^{t-1}) \to U_t$ is a Markov chain. Now, for each summand in \eqref{eq:info_sum} we have
\begin{align}
	& I(W; Y_t | Z^{t-1}) \nonumber\\
	&= D\big( \PP_{Y_t|Z^{t-1},W} \big\| \PP_{Y_t|Z^{t-1}} \big| \PP_{Z^{t-1},W}\big) \\
	&= \E \left\{ \log \frac{d\PP_{Y_t|Z^{t-1},W}}{d\PP_{Y_t|Z^{t-1}}} \right\} \\
	&= \E \left\{ \log \frac{d\PP_{Y_t|Z^{t-1},W}}{d\QQ_{Y_t|Z^{t-1}}} \right\} - \E \left\{ \log \frac{d\PP_{Y_t|Z^{t-1}}}{d\QQ_{Y_t|Z^{t-1}}}\right\} \\
	&= D\big( \PP_{Y_t|Z^{t-1},W} \big\| \QQ_{Y_t|Z^{t-1}} \big| \PP_{Z^{t-1},W}\big) \nonumber\\
	& \qquad \qquad - D\big( \PP_{Y_t|Z^{t-1}} \big\| \QQ_{Y_t|Z^{t-1}} \big| \PP_{Z^{t-1}}\big) \\
	& \le D\big( \PP_{Y_t|Z^{t-1},W} \big\| \QQ_{Y_t|Z^{t-1}} \big| \PP_{Z^{t-1},W}\big),
\end{align}
where the first two steps use the definition of conditional mutual information, the next step follows from the fact that $\PP_{Y_t|Z^{t-1}} \ll \QQ_{Y_t|Z^{t-1}}$ for every $t$, the step after that uses the definition of conditional divergence, and the last step follows because the divergence is nonnegative. Combining everything, we obtain the desired bound \eqref{eq:main_bound}.
\end{proof}

Note that the left-hand side of \eqref{eq:main_bound} involves the initial amount of uncertainty about the system (the metric entropy) and the best identification error performance at time $T$, while the right-hand side is a sum of information divergences added up from $t=1$ to $t=T$. The main power of the Meta-Theorem resides in the freedom to choose the auxiliary stochastic kernels $\{\QQ_{Y_t|Z^{t-1}}\}^T_{t=1}$. For example, we may consider the case in which $\gamma$ is designed for some ``nominal'' system $\theta_0 \in \Theta$, and we can take $\QQ_{Y_t|Z^{t-1}}$ to be the transition law of $\theta_0$ controlled by $\gamma$. With this choice, the $t$th term on the right-hand side of \eqref{eq:main_bound} quantifies the ``robustness radius'' of $\gamma$ on $\Lambda$ at time $t$. Alternatively, we may consider the setting, in which there is an optimal controller $\gamma_\theta$ associated to each $\theta \in \Theta$, and
\begin{align}
	\Pi_{\theta,\gamma_\theta}(dY_t|Z^{t-1}) = \Pi_{\theta',\gamma_{\theta'}}(dY_t|Z^{t-1})
\end{align}
for all $\theta,\theta' \in \Lambda$. In that case, we may take $\QQ_{Y_t|Z^{t-1}}$ to be the controlled transition law of $\theta$ interconnected with $\gamma_\theta$ (for any $\theta$). With this choice, the $t$th term on the right-hand side of \eqref{eq:main_bound} tells us by how much the actual performance of $\gamma$ operating in the presence of uncertainty differs from that of the optimal controller at time $t$ when there is no uncertainty. In general, the use of an auxiliary sequence of $\QQ$-kernels is similar to the use of auxiliary channels in the information-theoretic ``meta-converse'' of Polyanskiy et al.~\cite{PPV,PolThesis}.

The remainder of the paper is devoted to several sample applications of the Meta-Theorem, intended to showcase its power and flexibility.

\section{Fundamental limits of identification}

Our first application of the Meta-Theorem concerns the fundamental limitations of system identification algorithms. For the results of this section, the precise structure of the controller $\gamma$ is irrelevant, and the influence of $\gamma$ manifests itself indirectly through time-dependent bounds on the metric identification error. For notational simplicity, we will denote by $P_{\theta,t}$ the stochastic kernel $P_\theta(dy_t|y^{t-1},u^{t-1})$, where it is understood that $P_{\theta,t}$ is a Borel probability measure on $\sY$ and a Borel-measurable function of $(y^{t-1},u^{t-1})$.

The nature of the results presented below, and the techniques used to prove them, are inspired by the work of Yang \cite{YangAISTATS} on the limits of regression learning procedures in statistics. Moreover, the statistical estimation setting is subsumed by our results since a stochastic process with sample paths in $\sY^\infty$ and with parameter $\theta \in \Theta$ can be viewed as a dynamical system $\{P_\theta(dy_t|y^{t-1})\}^\infty_{t=1}$ (i.e.,~the controller does not affect the system).

\subsection{The Critical Separation bound}

The first result we prove is a lower bound on the $T$-step minimax identification error, which is expressed in terms of {\em upper} bounds for a sequence of $t$-step identification algorithms, from $t=0$ (i.e., any data-free guess about the system parameter $\theta$) to $t=T-1$:

\begin{theorem}\label{thm:ID_lower_bound} Consider a model class $\Lambda$ and a controller $\gamma$. Suppose that there exists a sequence $\{ \wh{\theta}_t \}^{T-1}_{t=0}$ of identification algorithms, such that
	\begin{align}
		\sup_{\theta \in \Lambda} \E_{\theta,\gamma} D\Big(P_{\theta,t} \Big\| P_{\wh{\theta}_{t-1},t}\Big) \le \delta_t, \qquad \forall t.
	\end{align}
	Then
	\begin{align}\label{eq:critical_minimax}
		e_T(\Lambda,\gamma) \ge \frac{\crit_{T}}{4},
	\end{align}
	where the {\em critical separation} $\crit_T$ is chosen so that
	\begin{align}\label{eq:critical_radius}
		H_\rho(\crit_T; \Lambda) = \left\lceil 2\left(\sum^T_{t=1} \delta_t + \log 2\right) \right\rceil.
	\end{align}
\end{theorem}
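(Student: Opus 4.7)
The plan is to combine Proposition~\ref{prop:basic} with the Meta-Theorem, choosing the auxiliary kernels in \eqref{eq:main_bound} so that the right-hand side is governed by the prescribed one-step divergence bounds $\delta_t$. Let $\Lambda_\eps \subseteq \Lambda$ be a maximal $\eps$-packing, so that $\log |\Lambda_\eps| = H_\rho(\eps;\Lambda)$. Proposition~\ref{prop:basic} gives
\begin{align*}
e_T(\Lambda,\gamma) \ge \frac{\eps}{2} \inf_{\wh{\theta}_T \in \cI_T(\Lambda_\eps)} \max_{\theta \in \Lambda_\eps} \Pi_{\theta,\gamma}\{\wh{\theta}_T \neq \theta\},
\end{align*}
and since $\max_\theta \Pi_{\theta,\gamma}\{\wh{\theta}_T \neq \theta\} \ge 1 - \min_\theta \Pi_{\theta,\gamma}\{\wh{\theta}_T = \theta\}$, the task reduces to \emph{upper} bounding $\min_\theta \Pi_{\theta,\gamma}\{\wh{\theta}_T = \theta\}$ uniformly over all $\wh{\theta}_T \in \cI_T(\Lambda_\eps)$. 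This is exactly what the Meta-Theorem delivers, provided we can exhibit a suitable sequence of auxiliary kernels $\{\QQ_{Y_t|Z^{t-1}}\}$.

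For that I would use the ``plug-in'' choice $\QQ_{Y_t|Z^{t-1}} = P_{\wh{\theta}_{t-1},t}(\cdot \mid Y^{t-1},U^{t-1})$ built from the hypothesized $(t-1)$-step identification algorithm $\wh{\theta}_{t-1}$, which is $\sigma(Z^{t-1})$-measurable and hence yields a bona fide stochastic kernel. Under $\PP$ conditioned on $W = i$, the law of $Y_t$ given $Z^{t-1}$ is exactly $P_{\theta_i,t}$, so the $t$th conditional divergence term on the right-hand side of \eqref{eq:main_bound} collapses to
\begin{align*}
\E\bigl[ D\bigl(P_{\theta_W,t} \bigm\| P_{\wh{\theta}_{t-1},t}\bigr)\bigr],
\end{align*}
where $W$ is uniform on $[N_\rho(\eps;\Lambda)]$ (the factor $\PP_{U_t,Z^{t-1},W}$ in the conditional divergence is harmless because the integrand does not depend on $U_t$). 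Each of the $N_\rho(\eps;\Lambda)$ summands in this uniform average is bounded by $\sup_{\theta \in \Lambda} \E_{\theta,\gamma} D(P_{\theta,t} \| P_{\wh{\theta}_{t-1},t}) \le \delta_t$ by hypothesis, so the entire right-hand side of \eqref{eq:main_bound} is at most $\sum_{t=1}^T \delta_t + \log 2$.

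Dividing \eqref{eq:main_bound} through by $H_\rho(\eps;\Lambda)$ then gives, for every $\wh{\theta}_T \in \cI_T(\Lambda_\eps)$,
\begin{align*}
\min_{\theta \in \Lambda_\eps} \Pi_{\theta,\gamma}\{\wh{\theta}_T = \theta\} \le \frac{\sum_{t=1}^T \delta_t + \log 2}{H_\rho(\eps;\Lambda)}.
\end{align*}
Setting $\eps = \crit_T$ and invoking \eqref{eq:critical_radius} makes this ratio at most $1/2$, so $\inf_{\wh{\theta}_T} \max_{\theta \in \Lambda_{\crit_T}} \Pi_{\theta,\gamma}\{\wh{\theta}_T \neq \theta\} \ge 1/2$, and plugging back into Proposition~\ref{prop:basic} produces $e_T(\Lambda,\gamma) \ge \crit_T/4$. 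The main subtlety I anticipate is verifying the absolute-continuity hypothesis $\PP_{Y_t|Z^{t-1}} \ll \QQ_{Y_t|Z^{t-1}}$ demanded by the Meta-Theorem; this is automatic on the event where the hypothesized divergences are finite, and can be handled by a standard truncation or limiting argument in the remaining null case. The rest is bookkeeping: choosing the right $\QQ$ is what drives the proof.
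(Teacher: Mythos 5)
Your proof is correct and follows essentially the same route as the paper: the plug-in choice $\QQ_{Y_t|Z^{t-1}} = P_{\wh{\theta}_{t-1}(Z^{t-1}),t}$, the observation that the conditional divergence averages $\E_{\theta_i,\gamma} D(P_{\theta_i,t}\|P_{\wh{\theta}_{t-1},t})$ over the uniform prior and is therefore bounded by $\delta_t$, and the final combination of the Meta-Theorem with Proposition~\ref{prop:basic} at $\eps = \crit_T$. Your explicit remark about verifying $\PP_{Y_t|Z^{t-1}} \ll \QQ_{Y_t|Z^{t-1}}$ is a detail the paper passes over silently, but it does not change the argument.
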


\begin{proof} Consider the setting of Theorem~\ref{thm:main} with the given $\Lambda,\gamma$ and $\eps = \crit_T$ defined according to \eqref{eq:critical_radius}. For each $t$, let $\QQ_{Y_t|Z^{t-1}}$ be defined via
	\begin{align}
		\QQ(Y_t \in B | Z^{t-1}) = P_{\wh{\theta}_{t-1}(Z^{t-1})}(B | Z^{t-1})
	\end{align}
	for any Borel set $B \subseteq \sY$. Then
	\begin{align}
		& D \big( \PP_{Y_t|Z^{t-1},W}\big\| \QQ_{Y_t|Z^{t-1}} \big| \PP_{Z^{t-1},W} \big) \nonumber\\
		& = \frac{1}{N}\sum^N_{i=1} \int \PP(dz^{t-1}|W=i) D\Big(P_{\theta_i,t}\Big\| P_{\wh{\theta}_{t-1}(z^{t-1}),t}\Big) \\
		& \le \sup_{\theta \in \Lambda} \int \Pi_{\theta,\gamma}(dz^{t-1}) D\Big( P_{\theta,t} \Big\| P_{\wh{\theta}_{t-1}(z^{t-1}),t}\Big) \\
		&= \sup_{\theta \in \Lambda} \E_{\theta,\gamma} D\Big( P_{\theta,t} \Big\| P_{\wh{\theta}_{t-1}(z^{t-1}),t}\Big) \\
		&\le \delta_t.
	\end{align}
	Then, for any $\wh{\theta}_T$ taking values in $\Lambda_{\crit_T}$,
	\begin{align}
		H_\rho(\crit_T; \Lambda) \min_{\theta \in \Lambda_{\crit_T}} \Pi_{\theta,\gamma}\left\{ \wh{\theta}_T = \theta \right\} \le \sum^T_{t=1}\delta_t + \log 2.
	\end{align}
Combining this with \eqref{eq:critical_radius} and noting that $\wh{\theta}_T$ was arbitrary, we get
\begin{align}
	\inf_{\wh{\theta}_T \in \cI_T(\Lambda_{\crit_t})}\max_{\theta \in \Lambda_{\crit_T}} \Pi_{\theta,\gamma}\left\{ \wh{\theta}_T \neq \theta \right\} \ge \frac{1}{2}.
\end{align}
Finally, substituting this into the lower bound \eqref{eq:basic_minimax_bound}, we get \eqref{eq:critical_minimax}.
\end{proof}

\subsection{Easy identification implies small {\em a priori} uncertainty}

We now use Theorem~\ref{thm:ID_lower_bound} to prove that any class of systems that are easy to identify (in the sense that there exists a sequence of identification algorithms whose worst-case errors over the class decay at some prescribed rate) must necessarily have correspondingly small metric entropy. In other words, if a class of systems is easy to identify, then its {\em a priori} uncertainty could not have been very large.

To formalize things, consider a controller $\gamma$, a sequence of identification schemes $\{ \wh{\theta}_t\}^\infty_{t=0}$, and a nonincreasing sequence of positive reals $\{\beta_t\}^\infty_{t=0}$. For a given $k \ge 1$, let us define the set $\Lambda_k( \gamma, \{ \wh{\theta}_t\}^\infty_{t=0}, \{ \beta_t\}^\infty_{t=0})$ to consist of all systems $\theta \in \Lambda$, such that
\begin{align}
	\E_{\theta,\gamma} \rho^k(\wh{\theta}_{t},\theta) \le \beta_t, \qquad \forall t.
\end{align}

\begin{theorem} Suppose that $\gamma$ is such that, for all $t$ and all $\theta,\theta' \in \Theta$,
	\begin{align}\label{eq:smoothness}
		\E_{\theta,\gamma}D \big( P_{\theta,t} \big\| P_{\theta',t}\big) \le K  \rho^k(\theta,\theta')
	\end{align}
	for some $K > 0$. Then the class $\Lambda = \Lambda_k(\gamma, \{\wh{\theta}\}_t, \{\beta_t\})$ satisfies the bound
	\begin{align}
		H_\rho\left(5\beta_T^{1/k}; \Lambda\right) \le \left\lceil 2 \left( K\sum^T_{t=1} \beta_{t-1} + \log 2 \right)\right\rceil
	\end{align}
	for every $T$.
\end{theorem}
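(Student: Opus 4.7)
The plan is to apply Theorem~\ref{thm:ID_lower_bound} to the very sequence $\{\wh{\theta}_t\}$ that appears in the definition of $\Lambda_k$, and to compare the resulting lower bound on $e_T(\Lambda,\gamma)$ with the upper bound that the same algorithm $\wh{\theta}_T$ provides by hypothesis. The critical separation $\crit_T$ is then sandwiched between the two estimates, and monotonicity of $H_\rho(\cdot;\Lambda)$ in its first argument closes the loop.

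First I would verify the hypothesis of Theorem~\ref{thm:ID_lower_bound} with $\delta_t \deq K\beta_{t-1}$. Fix any $\theta \in \Lambda$ and disintegrate over the history $Z^{t-1}$: for each realization $z^{t-1}$ the value $\wh{\theta}_{t-1}(z^{t-1})$ is a deterministic point of $\Theta$, so the smoothness bound \eqref{eq:smoothness}, applied pointwise in the history, yields
\begin{align*}
\E_{\theta,\gamma} D\bigl(P_{\theta,t}\bigl\|P_{\wh{\theta}_{t-1},t}\bigr) \le K\, \E_{\theta,\gamma}\rho^k\bigl(\wh{\theta}_{t-1},\theta\bigr) \le K\beta_{t-1},
\end{align*}
the last inequality being the defining property of $\Lambda_k$. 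Taking a supremum over $\theta \in \Lambda$, Theorem~\ref{thm:ID_lower_bound} then delivers
\begin{align*}
e_T(\Lambda,\gamma) \ge \frac{\crit_T}{4}, \qquad H_\rho(\crit_T;\Lambda) = \left\lceil 2\left(K\sum_{t=1}^T \beta_{t-1} + \log 2\right)\right\rceil.
\end{align*}

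For the matching upper bound, I would observe that $\wh{\theta}_T$ is one specific $T$-step identification algorithm, so it dominates the infimum in the definition of $e_T(\Lambda,\gamma)$. Since $k \ge 1$, Jensen's inequality applied to the concave map $x \mapsto x^{1/k}$ gives $\E_{\theta,\gamma}\rho(\wh{\theta}_T,\theta) \le \bigl(\E_{\theta,\gamma}\rho^k(\wh{\theta}_T,\theta)\bigr)^{1/k} \le \beta_T^{1/k}$, whence $e_T(\Lambda,\gamma) \le \beta_T^{1/k}$. Combining with the previous lower bound forces $\crit_T \le 4\beta_T^{1/k}$, and since $H_\rho(\cdot;\Lambda)$ is nonincreasing,
\begin{align*}
H_\rho\bigl(5\beta_T^{1/k};\Lambda\bigr) \le H_\rho(\crit_T;\Lambda) = \left\lceil 2\left(K\sum_{t=1}^T \beta_{t-1} + \log 2\right)\right\rceil,
\end{align*}
which is the claim (the constant $5>4$ is cosmetic slack).

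The one subtle point is the first step: \eqref{eq:smoothness} is stated for a fixed deterministic $\theta'$, yet I must substitute the data-driven $\wh{\theta}_{t-1}(Z^{t-1})$. The natural reading is that \eqref{eq:smoothness} implicitly encodes a pointwise-in-history bound on the conditional divergence $D\bigl(P_{\theta,t}(\cdot|z^{t-1})\bigl\|P_{\theta',t}(\cdot|z^{t-1})\bigr) \le K\rho^k(\theta,\theta')$, so that disintegration against $\Pi_{\theta,\gamma}(dz^{t-1})$ immediately reduces the random-$\theta'$ case to the deterministic one. Once that interpretation is granted, everything else is a direct combination of Theorem~\ref{thm:ID_lower_bound} with the definition of $\Lambda_k$ and Jensen.
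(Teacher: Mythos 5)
Your proposal is correct and follows essentially the same route as the paper: verify the hypothesis of Theorem~\ref{thm:ID_lower_bound} with $\delta_t = K\beta_{t-1}$ by combining the smoothness condition with the defining property of $\Lambda_k$, obtain the upper bound $e_T(\Lambda,\gamma)\le \beta_T^{1/k}$ via Jensen, and conclude by monotonicity of $H_\rho(\cdot;\Lambda)$. The only difference is presentational --- you argue directly from $\crit_T \le 4\beta_T^{1/k}$ while the paper phrases the same comparison as a proof by contradiction --- and your explicit handling of substituting the data-dependent $\wh{\theta}_{t-1}(Z^{t-1})$ into \eqref{eq:smoothness} is a careful reading of a step the paper states without comment.
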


\begin{proof} From the smoothness condition \eqref{eq:smoothness} it follows that
	\begin{align}
		\E_{\theta,\gamma} D\big( P_{\theta,t} \big\| P_{\wh{\theta}_{t-1},t} \big) \le K \beta_{t-1}
	\end{align}
	for every $t \ge 1$. Hence, applying Theorem~\ref{thm:ID_lower_bound} with $\delta_t = K \beta_{t-1}$ we get
	\begin{align}
		e_T(\Lambda,\gamma) \ge \frac{\crit_T}{4},
	\end{align}
	where $\crit_T$ is chosen according to \eqref{eq:critical_radius}:
	\begin{align}\label{eq:crit}
		H_\rho(\crit_T; \Lambda) = \left\lceil 2\left( K \sum^T_{T=1} \beta_{t-1} + \log 2\right)\right\rceil.
	\end{align}
	Let $H_T$ denote the quantity on the right-hand side of \eqref{eq:crit}. Let us suppose that $H_\rho\left(5\beta_T^{1/k}; \Lambda\right) > H_T$. Then, because the mapping $\eps \mapsto H_\rho(\eps;\Lambda)$ is monotone decreasing, we must have $5\beta_T^{1/k} \le \crit_T$. But that implies that
	\begin{align}\label{eq:entropy_lower}
		e_T(\Lambda,\gamma) \ge \frac{\crit_T}{4} \ge \frac{5\beta_T^{1/k}}{4} > \beta_T^{1/k}.
	\end{align}
	On the other hand, for any $\theta \in \Lambda$ we have
	\begin{align}
		\E_{\theta,\gamma}\rho(\wh{\theta}_t,\theta) \le \left( \E_{\theta,\gamma}\rho^k(\wh{\theta}_t,\theta)\right)^{1/k} \le \beta^{1/k}_t,
	\end{align}
	where the first step uses Jensen's inequality and the second step uses the definition of $\Lambda$. This implies, in turn, that
	\begin{align}
		e_T(\Lambda,\gamma) \le \E_{\theta,\gamma} \rho(\wh{\theta}_T,\theta) \le \beta_T^{1/k},
	\end{align}
	which contradicts \eqref{eq:entropy_lower}. Hence, $H_\rho\left(5\beta^{1/k}_T; \Lambda\right) \le H_T$.
\end{proof}

As an example of when the smoothness condition \eqref{eq:smoothness} holds, consider a first-order nonlinear system of the form
	\begin{align}
		Y_t = f_\theta(Y_{t-1}) + U_{t-1} + V_t,
	\end{align}
	where $\sY = \sU = \Reals$ and $\{V_t\}$ is an i.i.d.\ sequence of Gaussian random variables with zero mean and variance $\sigma^2$. Suppose that the mappings $f_\theta$ satisfy the condition
	\begin{align}
		|f_\theta(y) - f_{\theta'}(y)|^2 \le K_0 F(y) \rho^k(\theta,\theta'), \quad \forall \theta,\theta' \in \Theta
	\end{align}
	for some $K_0 > 0$, $k \ge 1$, and some function $F : \Reals \to \Reals$ which is bounded on compacts. Then, provided $\gamma$ is chosen so that there exists some finite $R > 0$, such that  $|Y_t| \le R$ $\Pi_{\theta,\gamma}$-almost surely for every $\theta \in \Theta$, we will have, for any $\theta,\theta' \in \Theta$
	\begin{align}
		\E_{\theta,\gamma} D(P_{\theta,t}\|P_{\theta',t}) & = \frac{1}{2\sigma^2} \E_{\theta,\gamma}|f_\theta(Y_t) - f_{\theta'}(Y_t)|^2\\
		& \le \frac{K_0}{2\sigma^2} \max_{|y| \le R} F(y) \cdot \rho^k(\theta,\theta').
	\end{align}
	
To appreciate the implications of the above result, we can consider the following cases:
\begin{enumerate}
	\item $\beta_t \le Ct^{-\alpha}$ for some $C > 0$ and $0 < \alpha < 1$. Then, for all sufficiently small $\eps$, we will have
	\begin{align}
		H_\rho(\eps; \Lambda) \le C' \left(\frac{1}{\eps}\right)^{\frac{2(1-\alpha)}{k\alpha}},
	\end{align}
	where $C' > 0$ is a constant that depends only on $K, k, \alpha, C$. In this case, the metric complexity of $\Lambda$ is, essentially, that of a ball in an infinite-dimensional Hilbert space.
	\item $\beta_t \le Ct^{-1}$ for some $C > 0$. Then, for all sufficiently small $\eps$, we will have
	\begin{align}
		H_\rho(\eps; \Lambda) \le C' k \log \frac{1}{\eps},
	\end{align}
	where $C' > 0$ is a constant that depends only on $K, k, C$. In this case, $\Lambda$ is, essentially, a ball in a finite-dimensional Hilbert space.
\end{enumerate}

\section{Rates of convergence in adaptive control}

In this section, we will use the Meta-Theorem to derive a fundamental limit on the minimum time needed to achieve a particular control objective.

Consider the problem of adaptively controlling a first-order $n$-dimensional linear system
\begin{align}
Y_{t+1} = AY_t + U_t + V_{t+1}, \qquad t=1,2,\ldots
\end{align}
where $\sU = \sY = \Reals^n$, $\{U_t\}^\infty_{t=1}$ is the input (control) sequence, $\{Y_t\}^\infty_{t=1}$ is the output sequence, and $\{V_t\}^\infty_{t=1}$ is an i.i.d.\ Gaussian disturbance process with zero mean and covariance matrix $\sigma^2 I_{n \times n}$, independent of the initial state $Y_1$. We assume that the initial state $Y_1$ has a finite second moment, $\E \| Y_1 \|^2 = C < \infty$.  The unknown system matrix $A \in \Reals^{n \times n}$ is assumed to lie in the set
\begin{align}
	\Lambda = \{ A \in \Reals^{n \times n} : \| A \| \le 1 \},
\end{align}
where $\| \cdot \|$ denotes the operator (spectral) norm. The space of admissible controllers $\Gamma$ is assumed to consist of sequences $\gamma = \{\gamma_t\}^\infty_{t=1}$ of deterministic Borel mappings $\gamma_t : \sY^t \times \sU^{t-1} \to \sU$, so that $U_t = \gamma_t(Y^t,U^{t-1})$. The objective is to select a control law $\gamma^* \in \Gamma$ such that
\begin{align}\label{eq:objective}
& \limsup_{T \to \infty}  \E_{A,\gamma^*} \left\{\frac{1}{T} \sum^{T}_{t=1} \| Y_{t+1}\|^2 \right\} \nonumber \\
& \qquad = \inf_{\gamma\in\Gamma} \limsup_{T \to \infty} \E_{A,\gamma} \left\{ \frac{1}{T}\sum^{T}_{t=1} \|Y_{t+1}\|^2\right\}
\end{align}
for every $A \in \Lambda$.

Following Lai \cite{Lai86}, we can define the $T$-step {\em regret} of $\gamma$ on $A$ by
\begin{align}
R_T(\gamma,A) \deq \E_{A,\gamma} \left\{\sum^{T}_{t=1} \| Y_{t+1} - V_{t+1} \|^2 \right\}.
\end{align}
Since $Y_{t+1} - V_{t+1}$ is independent of $V_{t+1}$, we can write
\begin{align}
\E \| Y_{t+1} \|^2 &= \E \| Y_{t+1} - V_{t+1} \|^2 + n\sigma^2 \\
&= \E \| AY_t + U_t \|^2 + n\sigma^2 \\
&\ge n\sigma^2.
\end{align}
This implies that the the infimum on the right-hand side of \eqref{eq:objective} is equal to $n\sigma^2$; consequently, we seek a $\gamma^*$ such that, for all $A \in \Lambda$,
\begin{align}
\limsup_{T \to \infty} \frac{R_T(\gamma^*,A)}{T} = \inf_{\gamma} \limsup_{T \to \infty} \frac{R_T(\gamma,A)}{T} = 0.
\end{align}
Lai \cite{Lai86} calls any such $\gamma^*$ {\em asymptotically efficient}.

Given a controller $\gamma \in \Gamma$, let us define the quantity
\begin{align}
T^*_\gamma(\eps) \deq \sup_{A \in \Lambda} \inf \left\{ T \ge 1 : \frac{R_T(\gamma,A)}{T} < \eps \right\}
\end{align}
This is the minimum time it takes $\gamma$ to achieve average regret of less than $\eps$ on every $A \in \Lambda$. We will obtain a lower bound on $T^*_\gamma(\eps)$ for any $\gamma$ that has a certain property known as {\em persistent excitation} (cf.~\cite{Lai86,LaiCE,KumarVaraiya,Duflo97}):

\begin{definition} Given $c > 0$ and $\delta \in (0,1)$, a controller $\gamma \in \Gamma$ has the \emph{$(c,\delta)$-persistent excitation property} if there exists some $T_0 \in \Naturals$ such that, for every $A \in \Lambda$,
	\begin{align}\label{eq:PEP}
		\Pi_{A,\gamma} \left( \frac{1}{T} \sum^T_{t=1}Y_t Y^\intercal_t \succeq c I_{n \times n} \right) \ge 1-\delta, \quad \forall T \ge T_0
	\end{align}
	where for any two $M_1,M_2 \in \Reals^{n \times n}$ the notation $M_1 \succeq M_2$ means that $M_1 - M_2$ is a positive semidefinite matrix.
\end{definition}
\noindent Our main result is as follows:

\begin{theorem} Any controller $\gamma \in \Gamma$ that has the $(c,\delta)$-persistent excitation property with $\delta < 1/4$ must satisfy
\begin{align}\label{eq:mintime}
T^*_\gamma(\eps) = \Omega\left(\frac{n^2 \sigma^2}{\eps} \log \frac{1}{\eps}\right),
\end{align}
where the constant implicit in the $\Omega(\cdot)$ notation depends only on $c$ and $\delta$.
\end{theorem}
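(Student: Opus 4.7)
The plan is to invoke the Meta-Theorem (Theorem~\ref{thm:main}) in a form tailored to the linear-Gaussian structure of the system, with a Frobenius-norm packing of $\Lambda$, an identifier read off the controller's own trajectory via persistent excitation, and a Gaussian auxiliary kernel $\QQ$ centered on a nominal element of the packing. The intuition is that any $\gamma$ achieving $R_T(\gamma,A)/T<\eps$ on \emph{every} $A \in \Lambda$ must implicitly have localized $A$ to Frobenius accuracy $O(\sqrt{\eps/c})$; Theorem~\ref{thm:main} then converts the cost of this implicit identification into a lower bound on $T$.

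I would suppose, for contradiction, that $T<c_0 n^2\sigma^2\log(1/\eps)/\eps$ yet $R_T(\gamma,A)<T\eps$ for every $A\in\Lambda$, and fix a Frobenius $\eps'$-packing $\Lambda_{\eps'}\subset\Lambda$ with $(\eps')^2=\Theta(\eps/c)$; since $\Lambda$ contains the Frobenius unit ball, $H_\rho(\eps';\Lambda)=\Omega(n^2\log(1/\eps))$. Then build the identifier $\wh A_T \deq \argmin_{A'\in\Lambda_{\eps'}}\sum_{t=1}^T\|A'Y_t+U_t\|^2$. Optimality of $\wh A_T$ and the triangle inequality give $\sum_t\|(\wh A_T-A)Y_t\|^2\le 4\sum_t\|AY_t+U_t\|^2$; Markov applied to the regret bound makes the right-hand side at most $8T\eps$ with probability $\ge 1/2$. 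Intersecting with the PE event $\{T^{-1}\sum Y_tY_t^{\tr}\succeq cI\}$ of probability $\ge 1-\delta$, one obtains $\|\wh A_T-A\|_F^2\le 8\eps/c$ with probability $>1/2-\delta>1/4$ (using $\delta<1/4$); the choice of $\eps'$ then forces $\wh A_T=A$ on this event, so $\min_{A\in\Lambda_{\eps'}}\Pi_{A,\gamma}\{\wh A_T=A\}>1/4$.

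For the right-hand side of Theorem~\ref{thm:main} I would pick any $A_0\in\Lambda_{\eps'}$ and take $\QQ_{Y_t|Z^{t-1}}=\Normal(A_0Y_{t-1}+U_{t-1},\sigma^2 I)$ for $t\ge 2$ (with $\QQ_{Y_1}=\PP_{Y_1}$). The Gaussian-to-Gaussian KL formula yields $D(\PP_{Y_t|Z^{t-1},W}\|\QQ_{Y_t|Z^{t-1}})=\|(A_W-A_0)Y_{t-1}\|^2/(2\sigma^2)$. Writing $\E[Y_tY_t^{\tr}]=\sigma^2 I+\td\Sigma_t$ with $\sum_t \mathrm{tr}(\td\Sigma_t)\le T\eps$ (another reformulation of the regret bound), the total conditional divergence is dominated by the noise-floor part $\tfrac{T}{2}\|A_W-A_0\|_F^2$ plus a negligible regret-dependent correction. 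Substituting into Theorem~\ref{thm:main} and rearranging, using $H_\rho(\eps';\Lambda)=\Omega(n^2\log(1/\eps))$ and $(\eps')^2=\Theta(\eps/c)$, yields the claimed $T=\Omega(n^2\sigma^2\log(1/\eps)/\eps)$.

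The hard part is the joint calibration of the packing scale $\eps'$, the reference $A_0$, and the auxiliary kernel $\QQ$: $\eps'$ must be small enough that $H_\rho(\eps';\Lambda)$ retains the $\log(1/\eps)$ factor, yet large enough for the data-driven $\wh A_T$ to succeed under the Markov-level regret bound; and the KL must be bounded by invoking the noise-covariance floor $\sigma^2 I$ inside $\E[Y_tY_t^{\tr}]$ rather than a naive operator-norm bound on $\|Y_t\|^2$. The latter would cost a full factor of $n$ and erase the $\sigma^2$ scaling; the former exposes the cancellation of $\sigma^2 I$ against the $1/(2\sigma^2)$ KL prefactor and leaves $\sigma^2$ in the right place in the final inequality.
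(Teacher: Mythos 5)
Your reduction from low regret to exact identification over the packing set --- the ERM $\wh{A}_T$ over $\Lambda_{\eps'}$, the factor-of-$4$ comparison inequality, Markov applied to the regret, and intersection with the persistent-excitation event --- is sound, and is a legitimate (arguably cleaner) substitute for the paper's route through the Nemirovski--Tsypkin least-squares estimator $\td{A}_T=\sum_t F_tY_tY_t^\tr G_T^{-1}$ followed by quantization to the nearest packing point (Lemmas~\ref{lm:control_ident} and~\ref{lm:prob_error_upper}). The Frobenius-versus-spectral packing choice is also immaterial. The fatal problem is in the divergence half. With your auxiliary kernel $\QQ_{Y_t|Z^{t-1}}=\Normal(A_0Y_{t-1}+U_{t-1},\sigma^2I)$, each conditional divergence equals $\tfrac{1}{2\sigma^2}\E\|(A_W-A_0)Y_{t-1}\|^2$, and since $\E[Y_{t-1}Y_{t-1}^\tr\mid W]\succeq\sigma^2I$ this is at least $\tfrac12\|A_W-A_0\|_F^2$ per step. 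But $A_W$ and $A_0$ are two points of a \emph{global} $\eps'$-packing of the unit ball: their separation is bounded \emph{below} by $\eps'$, not above, and for a maximal packing the typical value of $\|A_W-A_0\|_F^2$ is a constant independent of $\eps$ and $\sigma$ (up to $\Theta(n)$ for the spectral ball). Your ``noise-floor part'' $\tfrac{T}{2}\|A_W-A_0\|_F^2$ is therefore not a benign main term; it is an $\eps$-free, $\sigma$-free quantity of order at least $T$, it dominates the regret correction $T\eps/(2\sigma^2)$, and after rearranging the Meta-Theorem you obtain only $T=\Omega(n\log(1/\eps))$ --- the $\sigma^2/\eps$ scaling, which is the entire content of \eqref{eq:mintime}, disappears. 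Retreating to a local packing of a radius-$O(\eps')$ ball around $A_0$ would control $\|A_W-A_0\|_F$ but collapses the entropy to $\Theta(n^2)$ and loses the $\log(1/\eps)$ factor instead, so no calibration of $\eps'$ and $A_0$ rescues this choice of $\QQ$.

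The paper's choice is $\QQ_{Y_t|Z^{t-1}}=\Normal(0,\sigma^2I)$, i.e., the reference is centered at the \emph{controlled} prediction $0$ rather than at any model-based prediction $A_0Y_{t-1}+U_{t-1}$. Then the per-step divergence is $\tfrac{1}{2\sigma^2}\E\|A_WY_{t-1}+U_{t-1}\|^2=\tfrac{1}{2\sigma^2}\E\|Y_t-V_t\|^2$, and the sum telescopes into $\tfrac{1}{2\sigma^2}\bigl(\E\|Y_1-V_1\|^2+\sup_A R_T(\gamma,A)\bigr)\le (C+n\sigma^2)/\sigma^2+T\eps/(2\sigma^2)$: the accumulated information \emph{is} the regret. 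The mechanism that caps the information flow is the controller's cancellation of $A_WY_{t-1}$ by $U_{t-1}$, not the proximity of $A_W$ to a fixed reference model. Your closing paragraph has the logic inverted: cancelling $\sigma^2I$ against the $1/(2\sigma^2)$ prefactor is precisely what manufactures the ruinous $\Theta(T)$ term, rather than ``leaving $\sigma^2$ in the right place.''
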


\begin{proof}
	
We first show that any good controller can be used to construct a good identification scheme. The proof of this assertion essentially follows Nemirovski and Tsypkin \cite{NemTsy}.

Given a controller $\gamma = \{\gamma_t\}$, we first note that the probability that any component of $Y_t$ vanishes is zero. Hence, without loss of generality for every $t$ we can write 
\begin{align}
\gamma_t(Y^t,U^{t-1}) = - F_t(Y^t,U^{t-1})Y_t, \qquad \text{a.s.}
\end{align}
for some measurable mapping $F_t : \sY^t \times \sU^{t-1}\to \Reals^{n \times n}$. Now for each $T$ let
\begin{align}
	G_T \deq \sum^{T}_{t=1} Y_t Y_t^\tr
	\end{align}
	and consider the following least-squares identification algorithm:
\begin{align}
\tilde{A}_T \deq \begin{cases}
0, & \text{if } \det G_T = 0 \\
\displaystyle \sum^{T}_{t=1} F_t(Y^t,U^{t-1}) Y_t Y_t^\tr G_{T}^{-1}, & \text{otherwise}
\end{cases}
\end{align}
For this identification algorithm, we have the following lemma, whose proof is presented in Appendix~\ref{app:ident_proof}:

\begin{lemma}\label{lm:control_ident} Suppose $\gamma$ has the $(c,\delta)$-persistent excitation property. Then for every $A \in \Lambda$ and for every $T \ge T_0$,
\begin{align}
\| \tilde{A}_T - A \|^2 \le \frac{1}{cT}\sum^{T}_{t=1} \| Y_{t+1} - V_{t+1} \|^2
\end{align}
 with $\Pi_{A,\gamma}$-probability at least $1-\delta$.
\end{lemma}

Next we show that if $\gamma$ achieves average regret of less than $\eps$ in $T$ time steps, then the corresponding identification scheme $\td{A}_T$ must have a small probability of error.

Given $\eps$, let $N_{\| \cdot \|}(\eps; \Lambda)$ denote the $\eps$-packing number of $\Lambda$ w.r.t.\ the metric induced by the spectral norm. Since $\Lambda$ is a norm ball in $\Reals^{n^2}$, there exist constants $b_n,c_n > 0$, such that
\begin{align}\label{eq:packing}
b_n + n^2 \log \frac{1}{\eps} \le H_{\| \cdot \|}(\eps; \Lambda) \le c_n + n^2 \log \frac{1}{\eps}
\end{align}
for all sufficiently small $\eps > 0$. Now let $N(\eps) = N_{\| \cdot \|}(4\sqrt{\eps/c}; \Lambda)$ and take $\{A_1,\ldots,A_N\} \subset \Lambda$ to be a maximal $4\sqrt{\eps/c}$-packing set. Given a controller $\gamma$, define
\begin{align}\label{eq:estimator}
\wh{W} \deq \argmin_{1 \le i \le N(\eps)} \| \tilde{A}_T - A_i \|.
\end{align}
Then we have the following lemma, whose proof is given in Appendix~\ref{app:upper_proof}:

\begin{lemma}\label{lm:prob_error_upper} Suppose that $\gamma$ has the $(c,\delta)$-persistent excitation property and achieves regret $< \eps$ in time $T$. Let $W$ be a random variable uniformly distributed over the set $\{1,\ldots,N(\eps)\}$ independently of $Y_1,\{V_t\}$. Then the estimator \eqref{eq:estimator} satisfies
\begin{align}
\PP\left( \wh{W} \neq W \right) \le \frac{1}{4} + \delta < \frac{1}{2}.
\end{align}
\end{lemma}

To finish the proof, we now apply the Meta-Theorem. For each $t$, let $\QQ_{Y_t|Z^{t-1}} = \QQ_{Y_t}$ be the normal distribution $N(0, \sigma^2 I_{n \times n})$. Then
\begin{align}
&	D\Big( \PP_{Y_t|Z^{t-1},W} \Big\| \QQ_{Y_t|Z^{t-1}}\Big| \PP_{Z^{t-1},W}\Big) \nonumber\\
& \qquad = \frac{1}{2\sigma^2}\E \| A_W Y_{t-1} + U_{t-1} \|^2 \\
& \qquad = \frac{1}{2\sigma^2}\E \| Y_t - V_t \|^2.
\end{align}
Then
\begin{align}
	&\frac{1}{2}\left(b_n + n^2 \log \frac{1}{4\sqrt{\eps/c}}\right) \nonumber\\
	&\quad\le \frac{1}{2\sigma^2}\sum^T_{t=1} \E \| Y_t - V_t \|^2 + \log 2 \\
	&\quad\le \frac{1}{2\sigma^2}\E \| Y_1 - V_1 \|^2 + \frac{1}{2\sigma^2} \sup_{A \in \Lambda}R_T(\gamma,A) + \log 2 \\
	&\quad\le \frac{C + n\sigma^2}{\sigma^2} + \log 2 + \frac{T\eps}{2\sigma^2}.
\end{align}
Rearranging, we obtain \eqref{eq:mintime}, and the theorem is proved. \end{proof}

\section{Conclusion}

We have presented a Meta-Theorem on the inevitable trade-offs between {\em a priori} uncertainty, {\em a posteriori} uncertainty, and the information accumulated online in the process of controlling an unknown stochastic dynamical system. The Meta-Theorem connects the notions of information, learning, and adaptation in the sense of Kolmogorov and Zames with the Shannon-theoretic notion of information gain quantified by the divergence between the actual sequence of the system kernels and some sequence of auxiliary stochastic kernels. The freedom of choosing these auxiliary kernels is what gives the Meta-Theorem its power. We have used the Meta-Theorem to derive fundamental lower bounds on the performance of system identification algorithms and on the minimum time needed to stabilize an uncertain linear system. As part of future work, we will investigate fundamental limits of robust estimation and control algorithms over uncertainty sets defined directly by divergence (relative entropy) constraints \cite{ChaRez07,SRC09}.

\section*{Acknowledgment} The author wishes to thank Tamer Ba\c{s}ar, Todd Coleman, Tara Javidi, Yury Polyanskiy, Cosma Shalizi, and Serdar Y\"uksel for stimulating discussions related to the  content of this work, as well as to its potential applications and extensions.

\begin{appendices}
	\renewcommand{\theequation}{\Roman{section}.\arabic{equation}}
	\setcounter{lemma}{0}
	\setcounter{equation}{0}

	\renewcommand{\thelemma}{\Roman{section}.\arabic{lemma}}
	
	\section{Proof of Lemma~\ref{lm:control_ident}}
	\label{app:ident_proof}
For brevity, we will write $F_t$ instead of $F_t(Y^t,U^{t-1})$. Suppose that the event in \eqref{eq:PEP} holds for a given $A \in \Lambda$. Then $G_T$ is invertible, and
\begin{align}
A - \tilde{A}_{T} &= \sum^{T}_{t=1} (A-F_t)Y_t Y_t^\tr G_T^{-1}.
\end{align}
Let $\Delta_t = A - F_t$ and $H_t = Y_t Y_t^\tr$. Then for any two vectors $u,v \in \Reals^n$ we have
\begin{align}
&\left|u^\tr (A - \tilde{A}_T) v\right|^2 \nonumber \\
&\le \left(\sum^{T}_{t=1} \left|u^\tr \Delta_t H_t G_T^{-1}v\right|\right)^2  \\
&\le \left( \sum^{T}_{t=1} \left\| \sqrt{H_t} \Delta_t^\tr u \right\| \, \left\| \sqrt{H_t} G^{-1}_T v \right\| \right)^2  \\
& \le \left(\sum^{T}_{t=1}  \left\| \sqrt{H_t} \Delta_t^\tr u \right\|^2 \right) \left(\sum^{T}_{t=1} \left\|  \sqrt{H_t} G^{-1}_T v \right\|^2 \right)  \\
& =\left(\sum^{T}_{t=1} u^\tr \Delta_t Y_t Y_t^\tr \Delta_t^\tr u \right) \cdot v^\tr G_T^{-1} v  \\
& \le \left(\sum^{T}_{t=1} u^\tr \Delta_t Y_t Y_t^\tr \Delta_t^\tr u \right) \cdot \frac{1}{cT} \| v \|^2,\label{eq:ident_bound}
\end{align}
where $\|\cdot\|$ denotes the Euclidean norm on $\Reals^n$, the third and the fourth steps use Cauchy--Schwarz, the fifth step uses the definition of $H_t$, and the last step uses the persistent excitation property. Taking the supremum of both sides of \eqref{eq:ident_bound} over all $v$ with $\| v \|=1$ and using the fact that
\begin{align}
\Delta_t Y_t = (A - F_t)Y_t = AY_t + U_t = Y_{t+1} - V_{t+1},
\end{align}
we obtain the bound
\begin{align}
\| (A - \tilde{A}_T) u \|^2 \le \frac{1}{cT} \sum^{T}_{t=1} \left|(Y_{t+1} - V_{t+1})^\tr u\right|^2
\end{align}
that holds for all $u \in \Reals^n$. Taking the supremum over all unit-norm $u$, we get the lemma.

\section{Proof of Lemma~\ref{lm:prob_error_upper}}
\label{app:upper_proof}

For every $i \in [N]$ define the following events:
\begin{align}
R^{(i)}_T &\deq \left\{ W=i \right\} \cap \left\{ \frac{1}{T} \sum^{T}_{t=1} \| Y_{t+1} - V_{t+1} \|^2 \ge 4\eps  \right\} \\
S^{(i)}_T & \deq \{ W = i \} \cap \left\{ \| \tilde{A}_T - A_i \| \ge 2 \sqrt{\eps/c} \right\} \\
E^{(i)}_T & \deq \{W = i \} \cap \left\{ \frac{G_T}{T} \succeq cI_{n\times n} \right\}.
\end{align}
Let $\PP_i(\cdot)$ and $\E_i\{\cdot\}$ denote $\PP(\cdot|W=i)$ and $\E\{\cdot|W=i\}$, respectively. If $\gamma$ achieves regret $\eps$ in time $T$, then by Markov's inequality
\begin{align}
\PP_i \Big(R^{(W)}_T\Big) &\le \frac{\E_i\left\{ \frac{1}{T}\sum^{T}_{t=1} \| Y_{t+1} - V_{t+1} \|^2\right\} }{4\eps} \le  \frac{1}{4}.
\end{align}
Now suppose that $W=i$, but $\wh{W} \neq i$ and $S^{(i)}_T$ is false. By definition of $\wh{W}$, we must then have
\begin{align}
	\big\| \td{A}_T - A_{\wh{W}} \big\| \le \big\| \td{A}_T - A_i \big\| < 2\sqrt{\eps/c}.
\end{align}
Moreover, since both $A_i$ and $A_{\wh{W}}$ belong to the $4\sqrt{\eps/c}$-packing set and $\wh{W}\neq i$, $\| A_i - A_{\wh{W}} \| \ge 4\sqrt{\eps/c}$. Then triangle inequality gives
\begin{align}
\big\| \tilde{A}_T - A_i \big\| \ge \big\| A_i - A_{\wh{W}} \big\| - \big\| A_{\wh{W}} - \tilde{A}_T \big\| > 2\sqrt{\eps/c}.
\end{align}
This contradicts the assumption that $S^{(i)}_T$ is false. Hence,
\begin{align}
\PP_i\big( \wh{W} \neq W \big) \le \PP_i\left( S^{(W)}_T \right).
\end{align}
By Lemma~\ref{lm:control_ident}, $S^{(i)}_T \cap E^{(i)}_T \subseteq R^{(i)}_T \cap E^{(i)}_T$. Therefore,
\begin{align}
& \PP_i\left( S^{(W)}_T \right) \nonumber\\
&= \PP_i\left(S^{(W)}_T \cap E^{(W)}_T\right) + \PP_i \left( S^{(W)}_T \cap \bar{E}^{(W)}_{T}\right) \\
 & \le \PP_i\left(R^{(W)}_T \cap E^{(W)}_T\right) + \PP_i \left( S^{(W)}_T \cap \bar{E}^{(W)}_{T}\right) \\
 & \le \PP_i\left(R^{(W)}_T\right) + \PP_i\left(\bar{E}^{(W)}_T\right) \\
 & \le \frac{1}{4} + \delta,
\end{align}
where the bar denotes set-theoretic complement. Averaging w.r.t.\ the distribution of $W$, we obtain the statement of the lemma.

\end{appendices}

\bibliographystyle{IEEEtran}
\bibliography{div_adaptive.bbl}
\end{document}